\definecolor{defblue}{rgb}{0.121,0.47,0.705}
\definecolor{linkblue}{rgb}{0.098,0.098,0.4392}
\let\emph\relax
\DeclareTextFontCommand{\emph}{\color{defblue}\em}
\newcommand\unobstructed{unobstructed\xspace}
\newcommand{\NP}{NP}
\renewcommand{\orcidID}[1]{\href{https://orcid.org/#1}{\includegraphics[scale=.03]{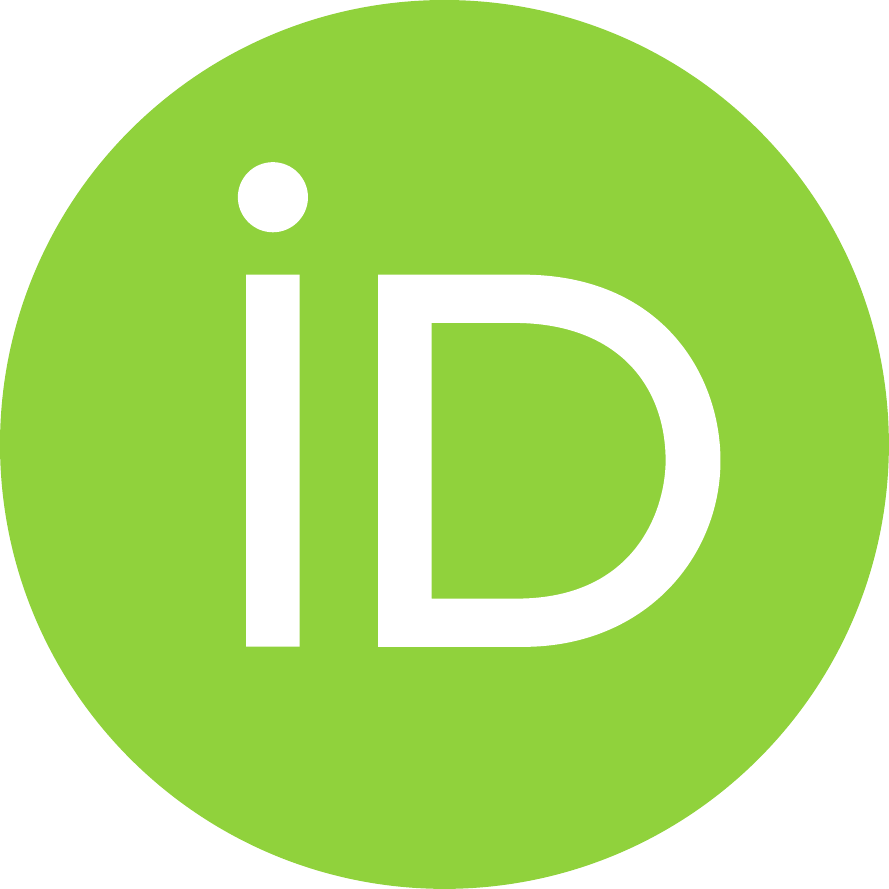}}}
\title{One-Bend Drawings of Outerplanar Graphs Inside Simple Polygons%
\thanks{This work was initiated at the Workshop on Graph and Network Visualization 2019. We thank all the participants for helpful discussions and Anna Lubiw for bringing the problem to our attention.}}
\author{
Patrizio Angelini\inst{1}\orcidID{0000-0002-7602-1524}
	\and
Philipp Kindermann\inst{2}\orcidID{0000-0001-5764-7719}
	\and
Andre L\"offler\inst{3}%
	\and
Lena Schlipf\inst{4}\thanks{This research is supported by the Ministry of Science, Research and the Arts Baden-W\"urttemberg (Germany).}\orcidID{0000-0001-7043-1867}
	\and
Antonios Symvonis\inst{5}\orcidID{0000-0002-0280-741X}
}
\institute{
John Cabot University, Rome, Italy\\
\texttt{pangelini@johncabot.edu}
	\and
Universit\"at Trier, Trier, Germany\\
\texttt{kindermann@uni-trier.de}
	\and
Universit\"at W\"urzburg, W\"urzburg, Germany\\
\texttt{andre.loeffler@uni-wuerzburg.de}
	\and
Universit\"at T\"ubingen, T\"ubingen, Germany\\
\texttt{schlipf@informatik.uni-tuebingen.de}
	\and
National Technical University of Athens, Athens, Greece\\
\texttt{symvonis@math.ntua.gr}
}
\authorrunning{P. Angelini, P. Kindermann, A. L\"offler, L. Schlipf, and A. Symvonis}
\begin{document}

\maketitle

\begin{abstract}
  We consider the problem of drawing an outerplanar graph with $n$ vertices 
  with at most one bend per edge if the outer face is already drawn as a simple polygon.
  We prove that it can be decided in $O(nm)$ time if such a drawing
  exists, where $m\le n-3$ is the number of interior edges. In the positive case, we can also compute such a drawing.
	
  \keywords{partial embedding \and outerplanar graphs \and
	visibility graph \and simple polygon}
\end{abstract}

\section{Introduction}

One of the fundamental problems in graph drawing is to draw a planar graph crossing-free under certain geometric or topological constraints. 
Many classical algorithms draw planar graphs under the constraint that all edges have to be straight-line segments~\cite{fraysseix1990draw,schnyder90,tutte1963draw}. %
In practical applications, however, we do not always have the freedom of drawing the whole graph from scratch, as some important parts of the graph may already be drawn.

This problem is known as the \textsc{Partial Drawing Extensibility} problem.
Formally, given a planar graph $G=(V,E)$, a subgraph $H=(V',E')$ with $V'\subseteq V$ and $E'\subsetneq E$ called the \emph{host} graph, and a planar drawing $\Gamma_H$ of host~$H$, the problem asks for a planar drawing $\Gamma_G$ of~$G$ such that the drawing of $H$ in $\Gamma_G$ coincides with $\Gamma_H$. 
This problem was first proposed by Brandenburg et al.~\cite{brandenburg03} in 2003
and has received a lot of attention in the subsequent years.

\paragraph{Problem Statement.}

In this paper, we consider a special drawing extension setting, in which~$G$ 
is a biconnected outerplanar graph with $n$ vertices and $m$ interior edges,
the host graph~$H$ is the simple cycle bounding the outer face of $G$,
and $\Gamma_H$ is a 1-bend drawing of $H$.
In other words, we have  a simple polygon~$P=\Gamma_H$ as input and we want to draw the interior 
edges of $G$ inside $P$ without crossings.
Testing for a straight-line extension is trivial.
Moreover, for any integer $k$, there exists some instances that have a 
$k$-bend extension but no $(k-1)$-bend extension; see, e.g., Fig.~\ref{fig:two-bends} for $k=2$.
Hence, it is of interest to test for a given~$k$ whether a $k$-bend extension exists. 
In this paper, we present an algorithm that solves this problem for $k=1$ in $O(nm)$ time. 
More formally, we prove the following theorem.

\begin{restatable}{theorem}{MainTheorem}\label{thm:main}
Given a biconnected outerplanar graph $G=(V,E)$ with $n$ vertices and $m$ interior edges,
and a 1-bend drawing of the outer face of $G$,
we can decide in $O(nm)$ time whether $G$ admits an outerplanar drawing with at most one bend per edge
that contains the given drawing of the outer face.
In the positive case, we can also compute such a drawing.
\end{restatable}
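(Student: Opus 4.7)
The approach relies on the combinatorial rigidity of outerplanar embeddings: a biconnected outerplanar graph has a unique planar embedding up to reflection, and this is pinned down by the prescribed drawing $\Gamma_H$ of the outer cycle. Consequently, the $m$ interior edges correspond to a fixed set of non-crossing chords of the polygon $P$, and their shared-face adjacencies form the weak dual tree $T$ of $G$ with $m$ edges.

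The plan is to process chords in a leaf-to-root order of $T$, building the drawing incrementally and maintaining a partition of $P$ into sub-polygons induced by the already-drawn 1-bend chords. When I process a current-leaf edge $e = (u,v)$ of $T$, the chord $e$ is the unique chord on the boundary of some internal face $f$ whose remaining boundary edges are outer edges of $G$, while all other undrawn chords lie on the opposite side of $e$. I compute the sub-polygon $P_e$ currently containing both $u$ and $v$, the visibility polygons of $u$ and $v$ inside $P_e$, and their intersection $R_e$. If $R_e = \emptyset$, I report ``no''; otherwise, I place a bend point $p_e \in R_e$ as close as possible to the leaf side of $P_e$, thereby minimising the area taken away from the chords still to be drawn, delete $f$ from $G$ (removing the corresponding edge from $T$), and recurse.

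The main obstacle is proving that this greedy rule is complete: whenever a valid 1-bend extension exists, the algorithm must find one. The danger is that pinning down $p_e$ could unnecessarily block a later chord. I plan to address this with an exchange argument. Starting from any hypothetical valid 1-bend drawing $\Gamma_G$, I want to show that the leaf chord $e$ in $\Gamma_G$ can be continuously deformed so that its bend coincides with the canonical $p_e$, without crossing any other chord or the polygon boundary. Because the leaf side of $e$ is free of other chords, there is room to slide the polyline of $e$ toward that side until $p_e$ matches the algorithm's choice; formalising this sliding argument, and pinning down the correct notion of an \emph{extreme} bend point within $R_e$, is where I expect the bulk of the technical work to lie.

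For the running time, the visibility polygon of a point in a simple polygon of size $k$ can be computed in $O(k)$ time, and the intersection of the two star-shaped polygons in $P_e$ inherits the same bound. Each of the $m$ chords is handled inside a sub-polygon of size $O(n)$, so the algorithm spends $O(n)$ time per chord and $O(nm)$ time overall. In the positive case, the sequence of polylines selected by the algorithm constitutes a valid 1-bend drawing extending $\Gamma_H$, which establishes the constructive part of the theorem.
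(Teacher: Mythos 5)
Your overall architecture (dual tree of the outerplanar graph, leaf-to-root processing, intersection $R_e=\mathcal{V}(u)\cap\mathcal{V}(v)$ of the two visibility regions, rejection when it is empty, $O(n)$ work per chord) matches the paper's. The genuine gap is exactly the step you flag as ``the bulk of the technical work'': the claim that there is a single extreme bend point in $R_e$ to which any valid drawing of the leaf chord can be slid. This is false in general. The paper distinguishes two cases. For a \emph{reflex} chord (every admissible polyline $u$--$b$--$v$ makes a reflex corner at $b$ as seen from the side containing the rest of the graph), any two candidate bends lie on the same side of the line $uv$, their polylines cross, and the crossing point obstructs exactly the intersection of the two obstructed regions; hence a unique minimal bend point exists (Lemma~\ref{lem:reflex}) and your sliding argument works. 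For a \emph{convex} chord, the boundary of $P$ between $u$ and $v$ can contain several reflex vertices so that the set of minimal bend points is a whole one-parameter family whose obstructed regions are pairwise incomparable: pushing the bend toward one part of the leaf side necessarily pulls it away from another. There is then no canonical $p_e$, and committing to any single one during the bottom-up pass can cut off a region that a later chord (the parent, or a not-yet-processed sibling) needs, while a different choice would have succeeded. Your exchange argument has no well-defined target to exchange toward.

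The paper's fix, which your single-pass greedy is missing, has three ingredients: (i) among sibling leaves, process reflex chords before convex ones, because a reflex sibling may be forced to bend inside the region a convex sibling would like to reserve; (ii) for a convex chord, do \emph{not} commit to a bend point in the bottom-up pass --- only remove the region $O_i^*$ obstructed by \emph{every} admissible bend, so the refined polygon deliberately over-approximates what any single drawing of that chord can preserve; and (iii) justify this over-approximation by showing that the ``restricted regions'' of convex siblings are interior-disjoint (Lemma~\ref{lem:convex-interaction}) and by adding a second, top-down pass that places the actual bend points only after the parent's bend is fixed. Without the reflex/convex dichotomy and this two-pass structure, the completeness direction of your argument does not go through; the complexity analysis is not the issue.
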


\paragraph{Related Work.}

For the case of extending a given straight-line drawing of a planar graph using straight-line segments as edges, Patrignani~\cite{patrignani06} showed the problem to be \NP{}-hard, but he could not prove membership in \NP{}, as a solution may require coordinates not representable with a polynomial number of bits.
Recently, Lubiw, Miltzow, and Mondal~\cite{lubiw18} proved that a generalization of the problem, where overlaps (but not proper crossings) between edges of $E\setminus E'$ and $E'$ are allowed, is hard for the existential theory of the reals $(\exists\mathbb{R}$-complete).

These results motivate allowing bends in the drawing.
Angelini et al.~\cite{angelini15} presented an algorithm to test in linear time whether there exists any topological planar drawing of~$G$ with pre-drawn subgraph, and Jel{\'{\i}}nek, Kratochv{\'{\i}}l, and Rutter~\cite{jelinek13} gave a characterization via forbidden substructures. 
Chan et al.~\cite{chan15} showed that a linear number of bends ($72|V'|$) per edge suffices.
This number is also asymptotically worst-case optimal as shown by Pach and Wenger~\cite{pach01} for the special case of the host graph not containing edges ($E'=\emptyset$).

Special attention has been given to the case that the host graph $H$ is exactly the outer face of~$G$. 
Already Tutte's seminal paper~\cite{tutte1963draw} showed how to obtain a straight-line convex drawing of a triconnected planar graph with its outer face drawn as a prescribed convex polygon. 
This result has been extended by Hong and Nagamochi~\cite{hong08} to the case that the outer face is drawn as a star-shaped polygon without chords.
Mchedlidze and Urhausen~\cite{mchedlidze18} study the number of bends required based on the shape of the drawing of the outer face and show that one bend per edge suffices if the outer face is drawn as a star-shaped polygon.
Mchedlidze, N\"ollenburg, and Rutter~\cite{DBLP:journals/algorithmica/MchedlidzeNR16}
give a linear-time algorithm to test for the existence of a straight-line drawing of~$G$ in the case that $H$ is biconnected and $\Gamma_H$ is a convex drawing.
Ophelders et al.~\cite{DBLP:conf/compgeom/OpheldersRSV21} characterize
the instances in which $G$ is a (planar) graph and $H$ is a cycle that admit
a positive solution for {\em any} straight-line drawing $\Gamma_H$ as the outer face.

\begin{figure}[t]
	\centering
	
\begin{subfigure}[t]{.28\textwidth}
	\centering
	\includegraphics{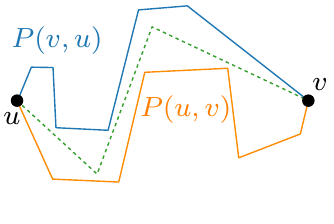}
	\caption{}
	\label{fig:two-bends}
\end{subfigure}
\hfill
\begin{subfigure}[t]{.34\textwidth}
	\centering
	\includegraphics[page=1]{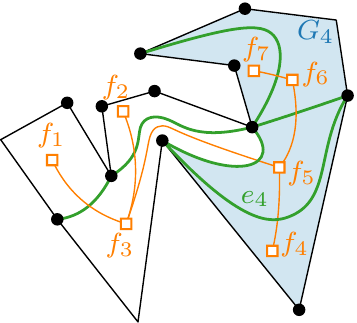}
	\caption{}
	\label{fig:dual-tree}
\end{subfigure}
\hfill
\begin{subfigure}[t]{.34\textwidth}
	\centering
	\includegraphics[page=3]{figs/dual-tree}
	\caption{}
	\label{fig:subproblem}
\end{subfigure}
 \caption{
 (a)~The edge $e=(u,v)$ requires two bends.
(b)~A biconnected outerplanar graph and its dual tree rooted in $f_7$.
(c)~A drawing of $G_4$ in $P_4$.} 
 \label{fig:dual-double-reflex}
\end{figure}

\paragraph{Notation and Preliminaries.}
\label{sec:notation}

For a pair of vertices $u$ and $v$, we denote by $\overline{uv}$ the straight-line segment between them. 
Starting at $u$ and following the boundary $\partial P$ of $P$ in counterclockwise order until reaching $v$, we obtain the interval $P(u,v)$.

The faces of $G$ induce a unique dual tree $T$~\cite{Proskurowski86} where each edge of $T$ corresponds to an \emph{interior} edge of $G$; see Fig.~\ref{fig:dual-tree}.
In the following, we consider $T$ to be rooted at some degree-1 node $f_{m+1}$.

We will traverse the dual tree twice -- first bottom-up and then top-down.
In the bottom-up traversal, we incrementally process the interior edges of $G$ 
and refine $P$ by cutting off parts that cannot be used anymore.
In the top-down traversal, we compute positions for the bend points inside the refined polygons.

We label the faces of $G$ as $f_1,\ldots,f_{m+1}$ according to the bottom-up traversal.
This sequence also implies a sequence of subtrees of $T$.
For step $i = 1, \dots, m$, let $T_i$ be the subtree of $T$ induced by the nodes $f_i,\dots,f_{m+1}$. 
We denote by $p(f_i)$ the parent of $f_i$ in $T$, and by $e_i$ the edge of $G$ corresponding to the edge of $T$ between $f_i$ and its parent.
We say that $f_i$ and $f_j$ are \emph{siblings} if $p(f_i)=p(f_j)$. 
We denote by $G_i$ the subgraph of $G$ induced by the vertices incident to the faces $f_i,\ldots f_{m+1}$, hence $G=G_1$.
Similar to the sequence of subtrees, we also have a corresponding sequence of refined polygons $P=P_1\subseteq\ldots\subseteq P_{m+1}$, where $P_{i}$ contains (at least) the vertices of~$G_{i}$. %

We draw the edge $e_i$ inside $P_i$ with one bend point $b$. This drawing of $e_i$ 
cuts~$P_i$ into two parts; the part that contains all the vertices of~$G_{i+1}$ is the \emph{\unobstructed region} $U_{i}(b)$ of $b$, the other part is the \emph{obstructed region} $O_{i}(b)$ of $b$; see Fig.~\ref{fig:Oi}.
We classify edge $e_i$ based on the type of corner that $b$ will induce in $U_{i}(b)$.
Edge $e_i$ is either
\begin{enumerate*}[label=(\roman*)]
 \item a \emph{convex} edge if it is possible to draw $e_i$ such that $b$ is a convex corner in $U_{i}(b)$, or
 \item a \emph{reflex} edge otherwise, i.e., any drawing of $e_i$ results in a strictly reflex corner at~$b$ in $U_i(b)$; see Fig.~\ref{fig:reflex}. 
\end{enumerate*}

\section{Refining the Polygon}
\label{sec:procedure}

During the bottom-up traversal of $T$, we maintain the invariant that 
$G_{i+1}$ can be drawn in $P_{i+1}$ if and only if $G_i$ can be drawn in $P_i$. 
Observe that, while $P_1=P$ represents a 1-bend drawing of the outer face of $G_1=G$,
in general some edges on the outer face of $G_i$ might be drawn in the interior
of $P_i$; see Fig.~\ref{fig:subproblem}.
To this end, to obtain $P_{i+1}$, we want to refine $P_i$ in the least restrictive way -- 
cutting away as little of $P_i$ as possible.
In particular, we will choose $P_{i+1}$ as the union of the \unobstructed regions for all possible bend points of $e_i$.

Among all leaves of $T_i$, we choose the next node $f_i$ to process as follows:
If~$T_i$ has a leaf corresponding to a reflex edge, then we process the 
corresponding interior edge next.
Otherwise, all leaves in $T_i$ correspond to convex edges, and we choose one 
of the nodes of the dual tree among them that has the largest distance to the 
root $f_{m+1}$ in $T$. 
We do this to make sure that a convex edge is only chosen if all siblings 
corresponding to reflex edges have already been processed.
The idea is that for a reflex edge we can determine its ``best'' drawing, 
namely one that cuts off only a part of the polygon that would be cut off by 
any valid drawing of this edge (see Lemma \ref{lem:reflex}), while for convex 
edges this is generally not possible. 
Furthermore, drawing a reflex edge may restrict the possible drawings for
siblings corresponding to convex edges, while drawing convex edges (with a 
convex bend point) does not (see Lemma~\ref{lem:convex-interaction}), so
after processing all siblings corresponding to reflex edges, we can compute
all possible bend points for a convex edge.

\begin{figure}[t]
	\centering
	\begin{subfigure}[t]{.45\textwidth}
	\centering
	\includegraphics[page=2]{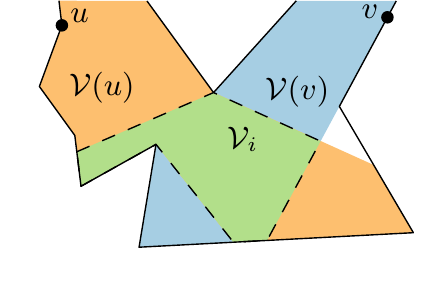}
	\caption{}
	\label{fig:Oi}
\end{subfigure}
\hfill
\begin{subfigure}[t]{.45\textwidth}
	\centering
	\includegraphics[page=2]{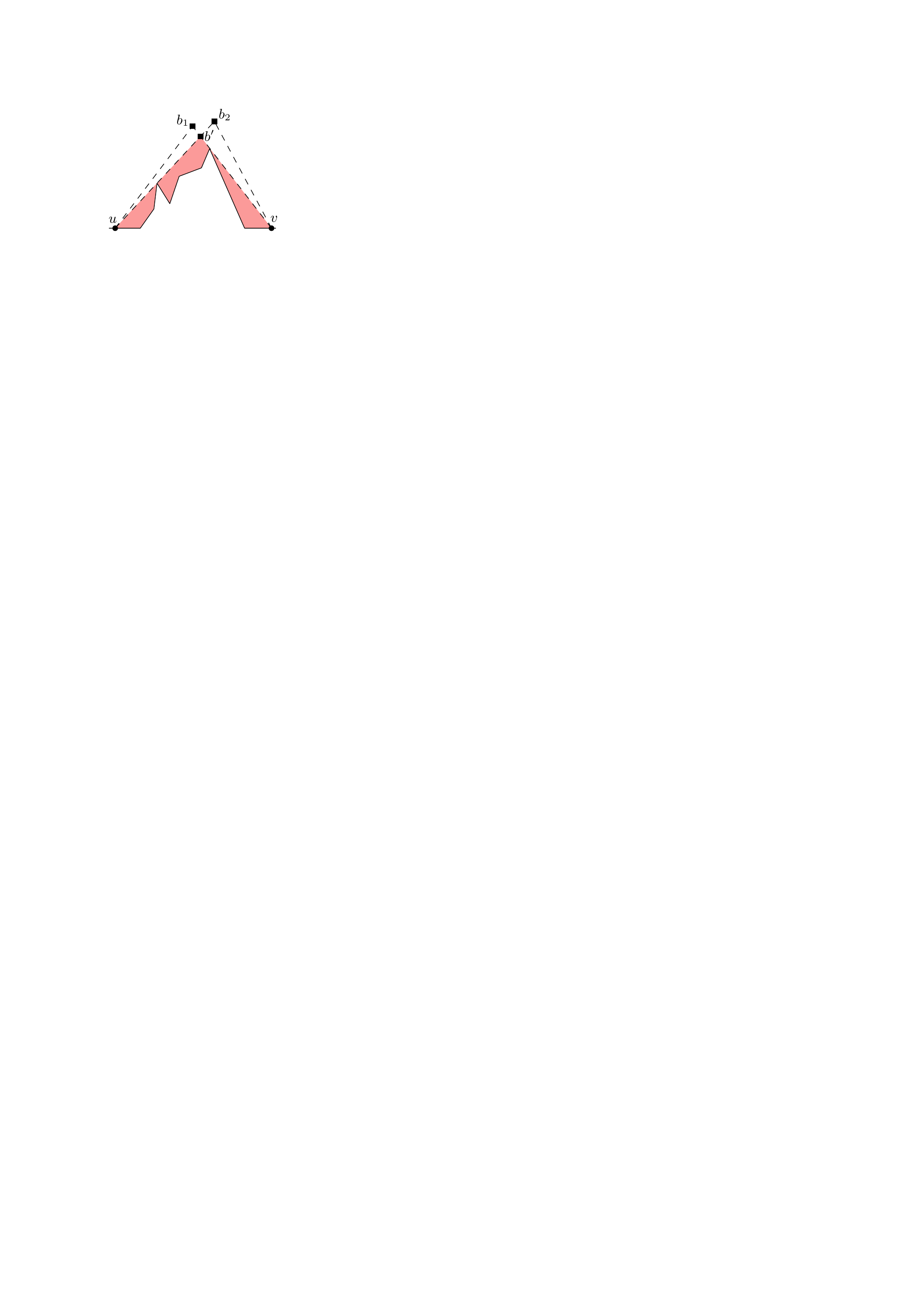}
	\caption{}
	\label{fig:reflex}
\end{subfigure}

\medskip

\begin{subfigure}[t]{.45\textwidth}
	\centering
	\includegraphics[page=1]{figs/convex}
	\caption{}
	\label{fig:Vi}
\end{subfigure}
\hfill
\begin{subfigure}[t]{.45\textwidth}
	\centering
	\includegraphics[page=3]{figs/convex}
	\caption{}
	\label{fig:minbend}
\end{subfigure}
\caption{
(a)~the region $O_i(b)$ for some $b\in\mathcal{V}_i$; %
(b)~$e_i$ is reflex %
and~$b'$ is the minimal bend point.
(c--d)~$e_i=(u,v)$ is convex: 
(c)~construction of $\mathcal{V}_i$; 
(d)~construction of $p_u,p_v$, the obstructed region $O^*_i$, and the set of minimal bend points.}
\label{fig:convex}
\end{figure}

Consider a leaf $f_i$ and the corresponding interior edge $e_i=(u,v)$.
W.l.o.g. assume that $P_i(u,v)$ does not contain an edge of the root $f_{m+1}$.
Let $\mathcal{V}(u)$ and $\mathcal{V}(v)$ be the (closed) regions inside 
$P_i$ visible from $u$ and $v$, respectively,
and let $\mathcal{V}_{i} = \mathcal{V}(u) \cap \mathcal{V}(v)$ be their intersection -- the region visible by both end points of $e_i$; see Fig.~\ref{fig:Vi}.
Clearly, any valid bend point for $e_i$ needs to be inside $\mathcal{V}_{i}$;
thus, if the interior of $\mathcal{V}_{i}$ is empty, then we reject the instance.
We define a set of \emph{minimal bend points} for $e_i$:
a point $b\in \mathcal{V}_{i}$ is minimal for $e_i$ if there is no other 
point~$b'\in \mathcal{V}_{i}$ with $O_{i}(b')\subsetneq O_{i}(b)$.
Note that all minimal bend points for $e_i$ lie on $\partial \mathcal{V}_i$,
so they are not valid bend points.

For reflex edges, we have the following lemma regarding minimal bend points.

\begin{restatable}{lemma}{lemReflex}\label{lem:reflex}
  Let $e_i = (u,v)$ be a reflex edge. 
  If there is a valid drawing of $e_i$ in $P_i$, then there is a unique minimal bend point $b$ for $e_i$.
\end{restatable}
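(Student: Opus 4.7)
The plan is to boil the statement down to a clean geometric criterion for strict containment of obstructed regions, then derive existence from an area-minimization / compactness argument and uniqueness from a crossings/exchange argument on the two candidate polylines.

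First I would observe that, because $e_i$ is reflex, every valid bend $b$ lies strictly on the side $S$ of the line $\overline{uv}$ where the corner at $b$ becomes reflex inside $U_i(b)$ — equivalently, the side containing the vertices of $G_{i+1}$; in particular $\overline{uv}$ cannot lie entirely inside $P_i$ (otherwise a bend placed on the other side would give a convex drawing, contradicting that $e_i$ is reflex). For such $b$, the polyline $u{-}b{-}v$ stays on side $S$, and one reads off from the definition that $O_i(b)$ consists of the portion of $P_i$ on the opposite side of $\overline{uv}$ (independent of $b$), together with the clipped triangle $\triangle u b v \cap P_i$ on side $S$. Consequently, for $b, b' \in \mathcal{V}_i$ on side $S$, one has $O_i(b') \subseteq O_i(b)$ iff $\triangle u b' v \subseteq \triangle u b v$, i.e., iff $b' \in \triangle u b v$, and the inclusion is strict iff $b' \ne b$.

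This criterion yields existence immediately: I restrict to the compact set $\mathcal{V}_i \cap S$ and pick a minimizer $b^{\star}$ of the continuous function $b \mapsto \mathrm{area}(O_i(b))$; by the criterion any strict containment $O_i(b') \subsetneq O_i(b^{\star})$ would produce a strictly smaller area, so $b^{\star}$ is a minimal bend point. For uniqueness, suppose $b_1 \ne b_2$ are both minimal. The criterion forces $b_2 \notin \triangle u b_1 v$ and $b_1 \notin \triangle u b_2 v$. Hence the polylines $\pi_j = u{-}b_j{-}v$ must cross at some point $c \notin \{u,v\}$: if they did not, one triangle would contain the other, and the corresponding bend would lie inside it, contradicting the above. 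Up to relabeling, $c \in \overline{u b_1} \cap \overline{b_2 v}$; the sub-segments $\overline{uc} \subseteq \overline{u b_1}$ and $\overline{cv} \subseteq \overline{b_2 v}$ lie in $P_i$, so $c \in \mathcal{V}_i$. Moreover $c \ne b_1$ (else $b_1 \in \overline{b_2 v} \subset \triangle u b_2 v$), so $c$ is strictly interior to $\overline{u b_1}$, whence $\triangle u c v \subsetneq \triangle u b_1 v$, and the criterion then yields $O_i(c) \subsetneq O_i(b_1)$, contradicting the minimality of $b_1$.

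The hard part will be the crossing step: one has to show that the two polylines are truly forced to cross, choose the right crossing among the possible segment-pair intersections, and handle degenerate configurations (collinear or parallel segments, or $c$ coinciding with an endpoint or a bend). Each such degeneracy, however, reduces to the same triangle-containment principle established in the first paragraph, so the remaining case analysis is routine.
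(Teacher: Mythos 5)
Your uniqueness argument is essentially the paper's own proof: both minimal bend points must lie on the same side of the line through $u$ and $v$, the two polylines $u{-}b_1{-}v$ and $u{-}b_2{-}v$ are forced to cross, and the crossing point yields a strictly smaller obstructed region, contradicting minimality of one of the two. You additionally establish existence via a compactness/area-minimization argument (which the paper leaves implicit) and make explicit the triangle-containment criterion and the degenerate non-crossing case that the paper compresses into the unproved assertion $O_i(b')=O_i(b_1)\cap O_i(b_2)$; these are useful elaborations but not a different approach.
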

\begin{proof}
For a contradiction, let $b_1$ and $b_2$ be two minimal bend points
for $e_i$.
Since~$e_i$ is a reflex edge, both $b_1$ and~$b_2$ must lie 
on the same side of the line $uv$.
Hence, there must be a crossing $b'$ between one of the segments $\overline{ub_1},\overline{b_1v}$ and one of the segments $\overline{ub_2},\overline{b_2v}$.
Further, $O_i(b')=O_i(b_1)\cap O_i(b_2)$, so $O_i(b')\subseteq O_i(b_1)$
and $O_i(b')\subseteq O_i(b_2)$. Since $O_i(b_1)\neq O_i(b_2)$, one of $b_1$ and $b_2$ is not minimal.
\end{proof}

Based on Lemma~\ref{lem:reflex}, we construct $P_{i+1}$ as the polygon
$P_i(v,u)\circ\overline{ub}\circ\overline{bv}$ for a reflex edge $e_i$ having minimal bend point $b$. Note that we can efficiently find~$b$ since it is 
a corner of~$\mathcal{V}_i$.

For a convex edge $e_i=(u,v)$, we can no longer rely on having a single minimal bend point.
Hence, we need to refine our notation.

We define the \emph{obstructed region} ${O^*_{i}=\bigcap_{b\in \mathcal{V}_{i}}O_{i}(b)}$ of $e_i$ to be the region of $P_i$ obstructed by all valid drawings of~$e_i$ -- wherever we place the bend point of~$e_i$, all the points of $O^*_{i}$ will be cut off.
Conversely, for every point $p\in P_i\setminus O^*_{i}$, there is a placement of the bend point of $e_i$ such that $p$ is not cut off; see Fig.~\ref{fig:minbend}. In view of this, we are going to set $P_{i+1}=P_{i}\setminus O_i^*$. However, we cannot directly compute $O_i^*$ according to its definition, as this would require considering all possible bend points. Thus, in the following we describe an efficient way to compute $P_{i+1}$.

If $u$ and $v$ lie in $\mathcal{V}_i$, then let $p_u=u$ and $p_v=v$. Otherwise, let $(u,u')$ and $(v',v)$ be the segments of $P_i(u,v)$ incident to $u$ and $v$, respectively.
Shoot a ray from $u$ through $u'$.
Rotate this ray in counterclockwise direction until it hits $\mathcal{V}_{i}$;
let this point be $p_u$.
Do the same with the ray from $v$ through $v'$, rotating it clockwise; 
let the point where it hits $\mathcal{V}_{i}$ be $p_v$. Let $\mathcal{V}_i(p_u,p_v)$ be the path from $p_u$ to $p_v$ along $\partial\mathcal{V}_i$ in counterclockwise order.
Then, $P_{i+1}$ is the polygon $P_i(v,u)\circ\overline {u p_u}\circ \mathcal{V}_{i}(p_u,p_v)\circ \overline{p_v v}$.

Note that all vertices of $G_{i+1}$ lie on $P_i(v,u)$, so they are contained in $P_{i+1}$. The bend point of $e_i$ has to lie in $\mathcal{V}_i$, which is completely contained in $P_{i+1}$. Hence, no bend point of another edge $e_j, j>i$ can lie in the region $O_i^*=P_{i}\setminus P_{i+1}$, because then $e_i$ and $e_j$ would cross.

If we manage to construct a non-degenerate polygon $P_{m+1}$, i.e., its interior
is non-empty, then we declare the instance as positive and we compute a drawing of $G$ in $P$ as follows.
We first draw $G_{m+1}$ in $P_{m+1}$, by picking an arbitrary bend point $b$ from the
interior of $\mathcal{V}_m$ in $P_{m+1}$ for $e_m$.
Suppose that we have a drawing $\Gamma_{i+1}$ for $G_{i+1}$ in $P_{i+1}$
where edge $e_{i+1}=(u,v)$ is drawn with bend point $b$.
To draw $G_i$ in $P_i$, we have to find valid bend points for the edges corresponding
to the children of $f_i$
inside the polygon $P_i'=P_i(u,v)\circ\overline{vb}\circ\overline{bu}$.
Consider such an edge $e_j=(u',v')$.
If $e_j$ is reflex, then we place its bend point inside $\mathcal{V}_j$,
very close to the minimum bend point.
If $e_j$ is convex, then we place its bend point on an arbitrary point 
in the interior of $\mathcal{V}_j$ that induces a convex corner.

\section{Correctness}
\label{sec:correctness}

In this section, we show that $G_{i+1}$ can be drawn in $P_{i+1}$ if and only if $G_i$ can be drawn in~$P_{i}$. 
If $e_i$ is a reflex edge, then $P_{i+1}$ contains exactly the points not cut off
by placing $e_i$ at its unique minimal bend point $b$, as Lemma~\ref{lem:reflex} indicates. Hence, $G_{i+1}$ can be drawn in $P_{i+1}$ if and only if $G_i$ can be drawn in $P_i$ with the bend point of $e_i$ arbitrarily close to $b$ inside $\mathcal{V}_i$. 

\begin{figure}[t]
	\centering
\begin{subfigure}[t]{.45\textwidth}
	\centering
	\includegraphics[page=2]{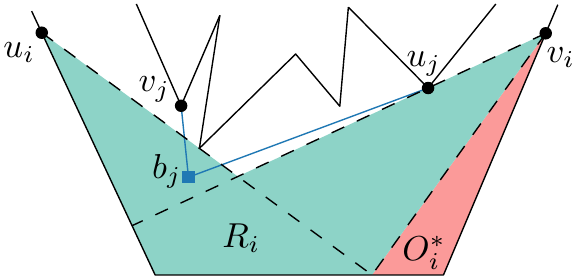}
	\caption{}
	\label{fig:independent-convex}
\end{subfigure}
\hfill
\begin{subfigure}[t]{.45\textwidth}
	\centering
	\includegraphics[page=1]{figs/independent-convex}
	\caption{}
	\label{fig:independent-reflex}
\end{subfigure}
\caption{
(a)~$u_j$ and $v_j$ lie in $P(u_i,v_i)\circ \overline{v_iu_i}$,
but $R_i$ and $R_j$ are interior-disjoint.
(b)~$P(u_j,v_j)$ forces bend $b_j$ to lie inside $R_i$.
 Then, $b_j$ creates a reflex angle.}
\label{fig:interaction}
\end{figure}
 
For each convex edge $e_i=(u,v)$, we define the open \emph{restricted region} in $P_i$ as  $R_{i}=\mathrm{interior} \left[\left(\left(\bigcup_{b\in \mathcal{V}_i}O_{i}(b)\right)\setminus O^*_{i}\right)\cap \left(P_i(u,v)\circ \overline{vu}\right)\right]$; see Fig.~\ref{fig:independent-convex}.%
For each point $r\in R_{i}$, there are two convex bend points $b$ and $b'$ for $e_i$ such that bending $e_i$ at $b$ cuts off~$r$, whereas bending $e_i$ at $b'$ does not.
Note that $\partial R_i$ contains all minimal bend points for $e_i$.
Recall that we only process a convex edge $e_i$ if all siblings of $f_i$ in the dual tree $T_i$ are leaves and correspond to convex edges. We now prove that
the restricted region of $e_i$ does not interfere with the restricted regions
of these siblings.
 
\begin{lemma}\label{lem:convex-interaction}
Let $e_i,e_j$ be convex edges such that $f_i$ and $f_j$ are leaves and siblings in~$T_i$.
Then, the restricted regions $R_i$ and $R_j$ are interior-disjoint.
\end{lemma}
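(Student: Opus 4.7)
My plan is to argue by contradiction. Suppose that a point $r$ lies in the interior of $R_i\cap R_j$. By the definition of $R_i$ (and symmetrically of $R_j$), $r$ then lies in the pocket $Q_i := P_i(u_i,v_i)\circ\overline{v_i u_i}$ as well as in $Q_j := P_i(u_j,v_j)\circ\overline{v_j u_j}$, and there exists a convex bend point $b_j\in\mathcal{V}_j$ for $e_j$ with $r\in O_j(b_j)$.

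First I would record the structural ingredient: since $f_i$ and $f_j$ are distinct leaves of $T_i$ sharing a common parent face $f=p(f_i)=p(f_j)$, they are two different child faces of $f$, so their outer boundary arcs $P_i(u_i,v_i)$ and $P_i(u_j,v_j)$ along $\partial P_i$ are disjoint, possibly sharing only a single endpoint if $e_i$ and $e_j$ share a vertex of $f$. I would then split on whether the straight chords $\overline{u_iv_i}$ and $\overline{u_jv_j}$ cross inside $P_i$.

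In the non-crossing case, the pockets $Q_i$ and $Q_j$ have disjoint interiors, because their boundaries consist of two disjoint arcs of $\partial P_i$ together with two non-crossing straight chords (this is the situation of Fig.~\ref{fig:independent-convex}). Since $R_i\subseteq Q_i$ and $R_j\subseteq Q_j$, no such $r$ can exist, contradiction. In the crossing case, $r$ lies in the ``lens'' $Q_i\cap Q_j$, so $r$ sits on the $f_i$-pocket side of line $u_iv_i$ and on the $f_j$-pocket side of line $u_jv_j$. For $r$ to be enclosed in $O_j(b_j)$, the polyline $u_j\to b_j\to v_j$ must separate $r$ from the arc $P_i(v_j,u_j)$; the fact that $r$ lies across $\overline{u_iv_i}$ from $P_i(u_j,v_j)$ then forces $b_j$ to be pushed into the $f_i$-pocket $Q_i$, in fact into $R_i$. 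But in that configuration the arc $P_i(u_j,v_j)$ lies on the wrong side of $b_j$, which makes the corner at $b_j$ inside $U_j(b_j)$ reflex, contradicting that $e_j$ is a convex edge (this is exactly the scenario of Fig.~\ref{fig:independent-reflex}).

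The main obstacle is the geometric argument in the crossing case: rigorously showing that no \emph{convex} bend point of $e_j$ can enclose a point $r$ lying across $\overline{u_iv_i}$ from $P_i(u_j,v_j)$. The key ingredients will be that $\mathcal{V}_j$ is restricted to points mutually visible from both $u_j$ and $v_j$, and that the crossing chord $\overline{u_iv_i}$ cuts through the $f_j$-pocket in such a way that any bend witnessing $r\in O_j(b_j)$ must have the arc $P_i(u_j,v_j)$ wrapping behind it, which is incompatible with a convex corner on the $U_j(b_j)$-side.
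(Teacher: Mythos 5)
Your overall strategy (reduce to the pockets $Q_i=P_i(u_i,v_i)\circ\overline{v_iu_i}$ and $Q_j=P_i(u_j,v_j)\circ\overline{v_ju_j}$, then deal with the case where they interact) starts out like the paper's, but your case distinction has a genuine gap. You split on whether the chords $\overline{u_iv_i}$ and $\overline{u_jv_j}$ cross, and in the non-crossing case you claim that $Q_i$ and $Q_j$ are automatically interior-disjoint because their boundaries consist of two disjoint arcs of $\partial P_i$ and two non-crossing chords. That is false: two closed curves with non-crossing boundaries can be \emph{nested}. Since $P_i$ need not be convex, the arc $P_i(v_i,u_i)$ can cross the chord $\overline{u_iv_i}$ and poke back into $Q_i$, so that the entire arc $P_i(u_j,v_j)$ --- and with it all of $Q_j$ --- lies inside $Q_i$, while neither chord crosses the other. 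This is exactly the configuration of Fig.~\ref{fig:independent-convex}, which you cite in support of disjointness but which the paper uses as the prototypical \emph{overlapping} case: $u_j$ and $v_j$ lie in $Q_i$, yet $R_i$ and $R_j$ are still interior-disjoint. So the main nontrivial case of the lemma falls through the cracks of your dichotomy.

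The paper handles the overlapping case directly: when some endpoint of $e_j$ lies in $Q_i$, the boundary arc $R_i(v_i,u_i)$ has exactly one reflex bend, and by construction no vertex of the polygon lies in the interior of $R_i$; in particular neither $u_j$ nor $v_j$ does, from which it concludes that no point of $Q_j$ can enter the interior of $R_i$. Your crossing-case argument (a bend point of $e_j$ that cuts off a point of $R_i$ would be forced into $R_i$ and would create a reflex corner, contradicting that $e_j$ is convex) is in the spirit of the paper's remark following the lemma about Fig.~\ref{fig:independent-reflex}, and you rightly flag it as the step still needing a rigorous argument; but even if completed, it would also have to be invoked in the nested, non-crossing configuration, not only when the chords cross. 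As written, the proposal does not establish the lemma.
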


\begin{proof}
 Let $e_i =(u_i,v_i)$ and $e_j=(u_j,v_j)$.
 Since $f_i$ and $f_j$ are siblings, 
$P(u_i,v_i)$ and $P(u_j,v_j)$ do not share any interior point.

By definition, the restricted regions of $e_i$ and $e_j$ lie in the polygon $P(u_i,v_i)\circ \overline{v_iu_i}$ and the polygon $P(u_j,v_j)\circ\overline{v_ju_j}$, respectively; see Fig.~\ref{fig:independent-convex}.
Hence, if these two polygons are disjoint, the statement follows.
 W.l.o.g. assume that at least one of $u_j,v_j$ lies in $P(u_i,v_i)\circ \overline{v_iu_i}$.
This implies that $R_i(v_i,u_i)$
has exactly one bend that forms a reflex angle
in $R_i$. Further, by construction, there is no vertex in the interior of $R_i$;
in particular, neither $u_j$ nor $v_j$ can lie in the interior of $R_i$.
Therefore, no point of $P(u_j,v_j)\circ\overline{v_ju_j}$ lies in the interior of $R_i$.
\end{proof}

It follows from 
Lemma~\ref{lem:convex-interaction} that the bend points for $e_i$ and $e_j$
can be chosen independently.
Note that the only reason for $e_j$ to be drawn with its bend point inside $R_{i}$ is that some part of $P(u_j,v_j)$ intersects $\overline{u_jv_j}$; see Fig.~\ref{fig:independent-reflex}.
This would imply that $e_j$ is a reflex edge; this is why we process the leaves that correspond to reflex edges before the leaves that correspond to convex edges.

We are now ready to prove the correctness of our algorithm.
\begin{restatable}{lemma}{lemCorrectness}\label{lem:correctness}
	If $\mathrm{interior}(\mathcal{V}_i)=\emptyset$, then $G_i$ cannot be drawn in $P_i$.
	Otherwise, $G_i$ can be drawn in $P_i$ if and only if $G_{i+1}$ can be drawn in $P_{i+1}$.
\end{restatable}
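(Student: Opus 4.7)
The first assertion is immediate: any valid 1-bend drawing of $e_i=(u,v)$ must place its bend inside $\mathrm{interior}(\mathcal{V}_i)$ by definition of visibility, so if that set is empty no such drawing exists and $G_i$ cannot be drawn in $P_i$. Henceforth assume $\mathrm{interior}(\mathcal{V}_i)\neq\emptyset$ and prove both directions of the equivalence separately.

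For the forward direction, take a drawing $\Gamma_i$ of $G_i$ in $P_i$ in which $e_i$ is bent at some $b'\in\mathrm{interior}(\mathcal{V}_i)$. Deleting $e_i$ yields a drawing $\Gamma_{i+1}$ of $G_{i+1}$ contained in $U_i(b')$, so it suffices to verify $U_i(b')\subseteq P_{i+1}$. If $e_i$ is reflex, Lemma~\ref{lem:reflex} supplies the unique minimal bend $b$, and minimality gives $O_i(b)\subseteq O_i(b')$, hence $U_i(b')\subseteq U_i(b)=P_{i+1}$. If $e_i$ is convex, then $O_i^{\ast}=\bigcap_{b\in\mathcal{V}_i}O_i(b)\subseteq O_i(b')$ holds by definition, so $U_i(b')\subseteq P_i\setminus O_i^{\ast}=P_{i+1}$.

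For the reverse direction, assume $\Gamma_{i+1}$ is a drawing of $G_{i+1}$ in $P_{i+1}$ and extend it by inserting $e_i$. In the reflex case, pick a bend $b^{\ast}\in\mathrm{interior}(\mathcal{V}_i)$ arbitrarily close to the unique minimal bend $b$ from Lemma~\ref{lem:reflex}; since $P_{i+1}$ is the closure of $U_i(b)$ and $\Gamma_{i+1}$ is a finite union of closed segments, continuity ensures $U_i(b^{\ast})\supseteq\Gamma_{i+1}$ for $b^{\ast}$ sufficiently close to $b$, so the inserted edge introduces no crossings.

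The main obstacle is the convex case: $P_{i+1}=P_i\setminus O_i^{\ast}$ only guarantees that $\Gamma_{i+1}$ avoids $O_i^{\ast}$, so it is not immediate that a single $b$ exists with $O_i(b)\cap\Gamma_{i+1}=\emptyset$. The plan is to combine the processing order with Lemma~\ref{lem:convex-interaction}: because reflex leaves are processed before convex leaves, every sibling $f_j$ of $f_i$ still present in $T_i$ is itself a convex leaf, and the argument in the proof of Lemma~\ref{lem:convex-interaction} shows that the face polygon of such $f_j$---which contains the $\Gamma_{i+1}$-drawing of the whole subtree rooted at $f_j$---is disjoint from $\mathrm{interior}(R_i)$. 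Drawings of siblings that were already processed lie, by the bottom-up refinement, outside the face polygon of $f_i$ and hence outside $R_i$. Consequently no edge of $\Gamma_{i+1}$ enters $\mathrm{interior}(R_i)$, so picking any $b\in\mathrm{interior}(\mathcal{V}_i)$ whose obstructed region is contained in $O_i^{\ast}\cup R_i$ (for instance any $b$ close to $\partial\mathcal{V}_i$ on the face side of $\overline{uv}$) yields $O_i(b)\cap\Gamma_{i+1}=\emptyset$, completing the extension.
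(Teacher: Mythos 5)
The first claim and the forward direction are fine and essentially match the paper: since $O_i^{*}\subseteq O_i(b')$ for every valid bend $b'$ of $e_i$, the restriction of $\Gamma_i$ to $G_{i+1}$ lands in $P_i\setminus O_i^{*}=P_{i+1}$.

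The reverse direction, however, has a genuine gap rooted in a misreading of what $G_i\setminus G_{i+1}$ is. Both endpoints of $e_i$ are incident to $p(f_i)\in\{f_{i+1},\dots,f_{m+1}\}$, so $e_i$ is already an edge of the induced subgraph $G_{i+1}$; the hypothesis ``$G_{i+1}$ is drawable in $P_{i+1}$'' therefore already comes with a fixed drawing of $e_i$ at some bend $b$, and there is no freedom left to ``insert $e_i$'' or to move its bend close to the minimal one. What actually has to be added to obtain $G_i$ are the private vertices and outer edges of $f_i$ (pre-drawn on $\partial P$) and, crucially, the interior edges $e_h$ joining $f_i$ to its children $f_h$, all of which must be placed inside the pocket $P_i(u,v)\circ\overline{vb}\circ\overline{bu}$. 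Your proof never mentions these edges, and they are the entire content of this direction. Concretely, two things must be shown: (a) each child $e_h$ retains a nonempty visibility region inside the pocket no matter where $b\in P_{i+1}$ lies --- for reflex $e_h$ because the placements of $b$ that would destroy $\mathcal{V}_h$ were already cut away when $P_{h+1}$ was formed and $P_{i+1}\subseteq P_{h+1}$, and for convex $e_h$ via its restricted region; and (b) the children edges can be drawn pairwise crossing-free --- reflex children near their unique minimal bends (whose incident segments lie on the boundary of all later polygons), and convex children via Lemma~\ref{lem:convex-interaction} applied to the \emph{children of $f_i$}, not, as you do, to $f_i$ and its own siblings. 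Your convex-case argument addresses a different question (choosing $b$ for $e_i$ to avoid $\Gamma_{i+1}$), which does not arise here, and even on its own terms the key claims --- that no edge of $\Gamma_{i+1}$ meets $\mathrm{interior}(R_i)$ and that some $b$ near $\partial\mathcal{V}_i$ satisfies $O_i(b)\subseteq O_i^{*}\cup R_i$ --- are asserted rather than proved, and would have to contend with the subtrees below the children of $f_i$, which live exactly where $R_i$ sits.
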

\begin{proof}[Sketch]
	The first statement of the lemma follows immediately.
	
	For the second statement, first assume that we have a drawing $\Gamma_i$ 
	of~$G_i$ in~$P_i$.
	Recall that $P_{i+1}=P_i\setminus O_i^*$.  The bend point of $e_i$ cannot lie in $O_i^*$ and so $e_i$ is drawn in  $P_{i+1}$. 
	Hence, the edges $e_{i+1},\ldots,e_m$ also must be drawn in $P_{i+1}$.
	Thus, restricting $\Gamma_i$ to $G_{i+1}$, we obtain a drawing of $G_{i+1}$ in $P_{i+1}$.
		
  	Now assume that we have a drawing of~$G_{i+1}$ in~$P_{i+1}$
	where $e_i$ is drawn with bend point $b$.
	In the top-down traversal of the algorithm, we place the bend points of the
	edges corresponding	 to the children of $f_i$ in the interior of the polygon $P_i'=P_i(u,v)\circ\overline{vb}\circ\overline{bu}$.
	These edges cannot cross any edge $e_i,\ldots,e_m$.
	
	For any child $f_h$ of $f_i$, we can prove that there is still a valid
	bend point for $e_h$ inside $P_i'$. %
	For any pair of children $f_h,f_j$ of $f_i$, $e_h$ and $e_j$ do not cross. 
	This can be proven using Lemma~\ref{lem:reflex} or~\ref{lem:convex-interaction}, depending on whether  these edges are convex or reflex. %
	The full proof can be found in the appendix.
\end{proof}

We are now ready to prove Theorem \ref{thm:main}, which is the main result of this paper. We report the statement for the reader's convenience.

\MainTheorem*
\begin{proof}
The correctness follows immediately from Lemma~\ref{lem:correctness}.

For the running time,
we first argue that, for every $0\le i\le m$, $P_{i+1}$ has $O(n)$ corners.
If $e_i$ is a reflex edge, then obviously $P_{i+1}$ has fewer corners than $P_i$,
so assume that $e_i$ is a convex edge.
To create $P_{i+1}$ from $P_i$, we cut off the obstructed region $O_i^*$
of $e_i$. 
Through an easy charging argument, we can charge the new corners on
the boundary of $P_{i+1}$ to removed corners of $P_i$:
each of the new introduced edges on the boundary of $P_{i+1}$
has one end point that is a corner of $P_i$, and it cuts off at least
one corner from $P_i$.
Note that two newly introduced edges may form a new vertex, but both of these edges cut off at least one vertex.

In each step of the algorithm, we have to compute the visibility region $\mathcal{V}_i$ for~$e_i$.
Since $\mathcal{V}_i$ is a simple polygon with $O(n)$ edges, it can be computed in $O(n)$ time, as demonstrated by Gilbers~\cite[page 15]{Gilbers2014}.
Doing two traversals, the visibility region of each edge needs to be computed at most twice.
The remaining parts of the algorithm (computing the dual graph of $G$, choosing the order of the faces $f_i$ in which we traverse the graph, computing $O_i^*$, ``cutting off'' parts of $P_i$, and placing the bend points) can clearly be done within this time. 

Thus, the total running time is $O(nm)$.
\end{proof}

\section{Open Problems}
\label{sec:conclusion}

There are several interesting open problems related to our work.
What if we allow more than one bend per edge?
What if we allow some well-behaved crossings, e.g., outer-$1$-planar drawings~\cite{DBLP:journals/ijcga/DehkordiE12}?
The $\exists\mathbb{R}$-completeness proof by Lubiw, Miltzow, and Mondal~\cite{lubiw18} for any
graph with pre-drawn outer face allows overlaps between interior edges
and edges of the outer face. Is the problem still $\exists\mathbb{R}$-complete
if overlaps are forbidden?

 \bibliographystyle{splncs04}
\bibliography{abbrv,diss}

\newpage

\appendix
\section*{Appendix}
\lemCorrectness*
\begin{proof}
	The first statement of the lemma follows immediately since the bend point 
	of $e_i$ has to be placed in the interior of $\mathcal{V}_i$.
	
	For the second statement of the lemma, first assume that we have a drawing $\Gamma_i$ 
	of~$G_i$ in~$P_i$ where $e_i$ is drawn with bend point $b$.
	We have that $P_{i+1}=P_i\setminus O_i^*$. 
	Since $e_i$ cannot be drawn in $P_i$ such that its bend point lies in the
	obstructed region, $b$ lies in $P_{i+1}$. 
	Hence, in $\Gamma_i$ the edges $e_{i+1},\ldots,e_m$ also must be drawn in $P_{i+1}$.
	Thus, restricting $\Gamma_i$ to $G_{i+1}$, we obtain a drawing of $G_{i+1}$ in $P_{i+1}$.
		
  Now assume that we have a drawing of~$G_{i+1}$ in~$P_{i+1}$
	where $e_i$ is drawn with bend point $b$.
	In the top-down traversal of the algorithm, we place the bend points of the
	edges corresponding	to the children of $f_i$ in the interior of the polygon $P_i'=P_i(u,v)\circ\overline{vb}\circ\overline{bu}$.
	By construction, these edges cannot cross any edge $e_i,\ldots,e_m$.
	
	We first argue that for any child $f_h$ of $f_i$, there is still a valid
	bend point for $e_h$ inside $P_i'$.
	The interior of the visibility region $\mathcal{V}_h$ of $e_h$ was not empty in~$P_h$.
	The polygon $P_i'$ contains every point of $\mathcal{V}_h$ except those
	cut off by the drawing of $e_i$.
	Hence, we only have to
	argue that the bend point $b$ of $e_i$ cannot be placed such that the visibility 
	region of $e_h$ is empty in $P_i'$.
	
	If $e_h$ is a reflex edge, then the visibility region of $e_h$ in $P_i'$
	is only empty if $b$ lies
	either in the region obstructed by the minimal bend point of $e_h$, or on one of the segments between
	the end points of $e_h$ and its minimal bend point. However, by construction,
	these points do not lie in the interior of $P_{i+1}$, so they cannot be used
	as the bend point of $e_i$.
	
	If $e_h$ is a convex edge, then the restricted region of $e_h$ contains (at least partially) the visibility region of $e_h$.
	By definition, if $b$ lies in the restricted region of $e_h$, then there is 
	still a valid bend point for $e_h$. 
	If $b$ lies outside the restricted region, then the whole restricted region lies in $P_i'$.

	We still have to argue that for any two children $f_h,f_j$ of $f_i$ with $h<j$,
	$e_h$ and $e_j$ do not cross.

	If $e_h=(u,v)$ is a reflex edge, then by Lemma~\ref{lem:reflex} it
	has a unique minimal bend point $b$ and, by construction, the segments 
	$\overline{ub}$ and $\overline{bv}$ are part of $\partial P_j$.
	Hence, by drawing $e_h$ with its bend point close to $b$, 
	it cannot intersect $e_j$.
	
	If $e_h$ is a convex edge, then by construction $e_j$ is also a convex edge.
	Hence, by Lemma~\ref{lem:convex-interaction}, their restricted regions are
	interior-disjoint, so $e_h$ and $e_j$ cannot cross
	if they are drawn with a convex bend point.
\end{proof}

\end{document}